\definecolor{defblue}{RGB}{84,39,143}
\definecolor{lipicsblue}{rgb}{0.08235294118,0.3098039216,0.537254902}
\definecolor{tblue}{RGB}{69,99,169}
\definecolor{tgreen}{RGB}{92, 136, 43}
\definecolor{tred}{RGB}{159,29,39}
\newcommand{\blue}{\textbf{\color{tblue}blue}\xspace}
\newcommand{\green}{\textbf{\color{tgreen}green}\xspace}
\newcommand{\red}{\textbf{\color{tred}red}\xspace}
\let\emph\relax
\DeclareTextFontCommand{\emph}{\color{defblue}\em}
\DeclareMathOperator{\ext}{ext}
\DeclarePairedDelimiterX\set[1]\lbrace\rbrace{#1}
\title{Saturated Drawings of Geometric Thickness \texorpdfstring{$\bm{k}$}{k} \footnote{This research was initiated at GGWeek 2024 in Trier. We would like to thank the organizers and participants of the workshop for the friendly and supportive environment and the fruitful discussions. }}
\titlerunning{Saturated Drawings of Geometric Thickness \texorpdfstring{$\bm{k}$}{k}}
\author[1]{Patricia~Bachmann}
\author[2]{Anna~Brötzner\footnote{supported by grant 2021-03810 from the Swedish Research Council (Vetenskapsr\r{a}det).}}
\author[3]{Miriam~Goetze\footnote{funded by the Deutsche Forschungsgemeinschaft (DFG, German Research Foundation) -- 520723789}}
\author[4]{Philipp~Kindermann}
\author[1]{Matthias~Pfretzschner\footnote{funded by the Deutsche Forschungsgemeinschaft (DFG, German Research Foundation) -- 541433306}}
\author[5]{Soeren~Terziadis\footnote{funded by the NWO Gravitation project NETWORKS under grant no. 024.002.003.}}
\affil[1]{Chair of Theoretical Computer Science, University of Passau, Germany, \\ \texttt{bachmanp@fim.uni-passau.de}, \texttt{pfretzschner@fim.uni-passau.de}}
\affil[2]{Department of Computer Science and Media Technology, Malm\"o University, Sweden,
  \texttt{anna.brotzner@mau.se}}
\affil[3]{Institute of Theoretical Informatics, Karlsruhe Institute of Technology, Germany, \texttt{miriam.goetze@kit.edu}}
\affil[4]{Algorithms Group, Trier University, Germany, \texttt{kindermann@uni-trier.de}}
\affil[5]{Algorithms Group, TU Eindhoven, The Netherlands, \texttt{s.d.terziadis@tue.nl}}
\authorrunning{P. Bachmann, A. Brötzner, M. Goetze, P. Kindermann, M. Pfretzschner, S. Terziadis}
\def\renewtheorem#1{%
  \expandafter\let\csname#1\endcsname\relax
  \expandafter\let\csname c@#1\endcsname\relax
  \gdef\renewtheorem@envname{#1}
  \renewtheorem@secpar
}
\def\renewtheorem@secpar{\@ifnextchar[{\renewtheorem@numberedlike}{\renewtheorem@nonumberedlike}}
\def\renewtheorem@numberedlike[#1]#2{\newtheorem{\renewtheorem@envname}[#1]{#2}}
\def\renewtheorem@nonumberedlike#1{  
\def\renewtheorem@caption{#1}
\edef\renewtheorem@nowithin{\noexpand\newtheorem{\renewtheorem@envname}{\renewtheorem@caption}}
\renewtheorem@thirdpar
}
\def\renewtheorem@thirdpar{\@ifnextchar[{\renewtheorem@within}{\renewtheorem@nowithin}}
\def\renewtheorem@within[#1]{\renewtheorem@nowithin[#1]}
\newtheorem{proposition}[theorem]{Proposition}
\newtheorem{observation}[theorem]{Observation}
\Crefname{observation}{Observation}{Observations}
\Crefname{obs}{Observation}{Observations}
\crefname{lemma}{Lemma}{Lemmas}
\Crefname{lemma}{Lemma}{Lemmas}
\crefname{corollary}{Corollary}{Corollaries}
\Crefname{corollary}{Corollary}{Corollaries}
\newcommand{\restateref}[1]{\IfAppendix{\hyperref[#1]{$\star$}}{\hyperref[#1*]{$\star$}}}
\begin{document}

\maketitle

\begin{abstract}
We investigate saturated geometric drawings of graphs with geometric thickness~$k$, where no edge can be added without increasing $k$. 
We establish lower and upper bounds on the number of edges in such drawings if the vertices lie in convex position. We also study the more restricted version where edges are precolored, and for $k=2$ the case for vertices in non-convex position.
\end{abstract}

\section{Introduction}
The \emph{geometric thickness} $\bar{\theta}(G)$ of a graph $G$ is the minimum number~$k$ such that there exists a straight-line drawing~$\Gamma$ of~$G$ and a $k$-edge-coloring~$\varphi\colon E(G) \to \{1,\dots,k\}$ that has no monochromatic crossings, see \cref{fig:non-convex-thickness_example} for an example.
\begin{figure}
    \centering
    \includegraphics[page=2]{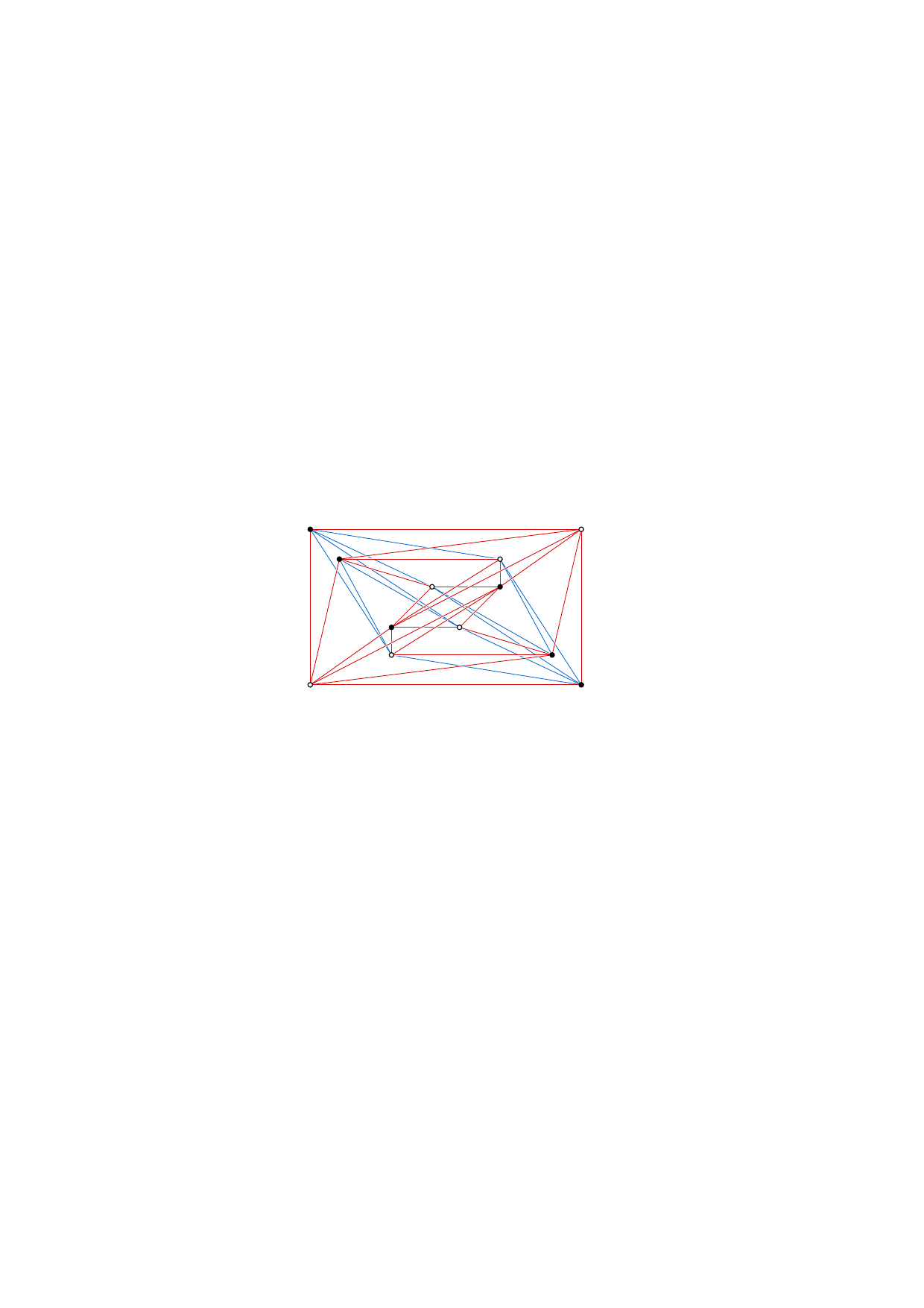}
    \caption{Taken from \cite[Fig. 2]{dillencourt2004geometric}. A drawing of the non-planar graph~$K_{6,6}$ witnessing $\bar{\theta}(K_{6,6}) = 2$.}
    \label{fig:non-convex-thickness_example}
\end{figure}
We also write $E_i \subseteq E(G)$ to denote all edges of color $i$.
We call~$\Gamma$ a \emph{$\Theta^k$-drawing} (with thickness $k$).
When the coloring~$\varphi$ of $\Gamma$ is given, we say that~$\Gamma$ is \emph{precolored} (we always assume that in a given coloring, there are no monochromatic crossings).
If the vertices in $\Gamma$ are in convex position, $\Gamma$ is \emph{convex}.
Connecting all vertices of the outer face of a $\Theta^k$-drawing with edges along the convex hull yields a cycle.
We call this cycle the \emph{outer cycle} of~$\Gamma$, and an edge~$e$ on this cycle an \emph{outer edge} of~$\Gamma$. 
All other edges are \emph{inner edges}. 
Note that not all edges of the outer cycle are necessarily contained in $\Gamma$.

We call a $\Theta^k$-drawing~$\Gamma$ of a graph~$G$ \emph{saturated} if there are no two vertices $u,v \in V(G)$ with~$uv \notin E(G)$ such that the drawing~$\Gamma'=\Gamma+uv$ with the edge~$uv$ drawn as a straight line is also a $\Theta^k$-drawing.
If $\Gamma$ is precolored, we require that $\Gamma'$ uses the same coloring, i.e., only the color of $uv$ may vary.
If~$\Gamma$ has the minimum (maximum) number of edges among all saturated $\Theta^k$-drawing on the same number of vertices, it is \emph{min-saturated} (\emph{max-saturated}).
We assume that vertices lie in general position, i.e., there are no three vertices on a line.

Max- and min-saturation have similarly been defined for graph classes instead of drawings.
There is a rich history of results analyzing max-saturated graphs (Tur\'an type results, following seminal work by Tur\'an~\cite{turan2007extremalaufgabe}).
It is widely known that max-saturated planar graphs contain $3n-6$~edges, and bounds have been proven for several beyond planar graph classes.
For example, $1$-planar and $2$-planar max-saturated graphs have $4n-8$ and $5n-10$ edges, respectively \cite{pach1997graphs}, while general $k$-planar max-saturated graphs are only known to have at most $3.81\sqrt{k}n$ edges~\cite{ackerman2019topologicalGraphs}.
Similar results have recently been shown for min-$k$-planar graphs~\cite{binucci2024minkplanarDrawings}.

The study of min-saturated graphs builds on the work of Erd\H{o}s, Hajnal, and Moon~\cite{erdos1964problem}, who characterize min-saturated $K_k$-free graphs.
A survey~\cite{faudree2011survey} with a recent second edition provides an overview of results in this direction. While min-saturated planar graphs also contain $3n-6$ edges, min-saturated 1-planar graphs only have at most $\frac{45}{17}n+O(1)$ edges.
Chaplick et al.~\cite{chaplick2024edge} recently investigated the number of edges in min-saturated (not necessarily straight-line) $k$-planar drawings
under a variety of drawing restrictions. 

Graphs of geometric thickness $k$ form a relevant beyond-planar graph class.
The concept was first introduced by Kainen~\cite{kainen1973thickness} (who used the term linear thickness) and later investigated by Dillencourt, Eppstein, and Hirschberg~\cite{dillencourt2004geometric}, who considered the geometric thickness of complete and complete bipartite graphs.
Checking whether a graph has geometric thickness at most $k$ has been shown to be \NP-hard~\cite{durocher2016thickness} even for $k\leq 2$ and for multigraphs it is 
$\exists\mathbb{R}$-complete~\cite{forster2024geometric} for $k\leq 30$.
In fact, a graph~$G$ has stack number at most~$k$ if and only if it admits a convex $\Theta^k$-drawing. 
That is, in the convex setting, we investigate the min-saturation of graphs with stack number at most~$k$.

We provide upper and lower bounds on the number of edges in min-saturated $\Theta^k$-drawings in the precolored and non-precolored, as well as in the convex and non-convex setting.
After presenting upper bounds for convex precolored and non-precolored $\Theta^k$-drawings in \cref{subsec:general_bounds}, we give lower bounds for $\Theta^3$-drawings (applying to the precolored and non-precolored setting) in \cref{subsec:small_k}.
In \cref{sec:non-convex-free}, we present a lower bound for non-convex non-precolored $\Theta^2$-drawings and conclude in \cref{sec:conclusion}. 
Results marked with $(\star)$ are proved in the appendix.

\section{Convex Drawings}\label{sec:convex-free}
Each color class of a convex $\Theta^k$-drawing induces an outerplane graph~$H$.
For~$\ell \geq 3$, we call an outerplane graph~$H$ an \emph{inner $\ell$-angulation} if every inner face has size~$\ell$ and the outer face is a simple cycle. 
Inner $3$-angulations and inner $4$-angulations are called \emph{inner triangulations} and \emph{inner quadrangulations}, respectively.
Double-counting the edge-face-incidences shows that every inner $\ell$-angulation with $n$~vertices and $f$~faces contains $\frac{1}{2}(n+\ell(f-1))$~edges.
Now Euler's formula implies:
\begin{observation}
\label{obs:edges_in_inner_triang_inner_quadrang}
    For~$\ell \geq 3$, every inner $\ell$-angulation of a graph on $n \geq \ell$ vertices contains $\frac{n-\ell}{\ell-2}$~inner edges.
\end{observation}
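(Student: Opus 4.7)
The plan is to combine the two ingredients that the paragraph before the observation has already set up: the double-counted edge count $m = \tfrac{1}{2}(n + \ell(f-1))$ and Euler's formula $n - m + f = 2$. These are two linear equations in the two unknowns $m$ and $f$, so solving the system will pin down the total number of edges, and subtracting the $n$ outer edges of the simple outer cycle will yield the number of inner edges.

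Concretely, I would first eliminate $f$ by using Euler to write $f = 2 - n + m$. Plugging this into the double-counting identity gives $2m = n + \ell(1 - n + m)$, which after rearrangement becomes $m(\ell - 2) = n(\ell - 1) - \ell$, i.e., $m = \frac{n(\ell-1) - \ell}{\ell - 2}$. Since the outer face is a simple cycle on all $n$ vertices, there are exactly $n$ outer edges, so the number of inner edges equals
\[
m - n \;=\; \frac{n(\ell-1) - \ell - n(\ell-2)}{\ell - 2} \;=\; \frac{n - \ell}{\ell - 2},
\]
as claimed.

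The only real thing to be careful about is that the formulas make sense, i.e., that $f \geq 1$ (so that $\ell(f-1) \geq 0$) and that dividing by $\ell - 2$ is legal; the latter is guaranteed by the hypothesis $\ell \geq 3$, and the assumption $n \geq \ell$ ensures that the expression is nonnegative (and equals $0$ precisely when the graph is a single $\ell$-cycle with no inner edges at all). No serious obstacle is expected — the entire argument is a short algebraic manipulation, and the substantive combinatorial input (the edge-face double count and the fact that the outer face is an $n$-cycle) has already been recorded.
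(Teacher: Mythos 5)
Your proposal is correct and follows exactly the route the paper intends: the text immediately preceding the observation supplies the double-counting identity $m=\frac{1}{2}(n+\ell(f-1))$ and explicitly says that Euler's formula then implies the claim, which is precisely the elimination you carry out. The algebra checks out (e.g., $\ell=3$ gives $n-3$ inner edges, matching maximal outerplanar graphs), so nothing is missing.
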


\subsection{Bounds for Saturated Convex \texorpdfstring{$\bm{\Theta^k}$}{Theta-k}-Drawings}
\label{subsec:general_bounds}

Note that each color class of a convex~$\Theta^k$-drawing of an $n$-vertex graph is a subgraph of some inner triangulation.
That is, we can cover the edges of the~$\Theta^k$-drawing with $k$~inner triangulations, any two of which only share the outer cycle.
Now, \cref{obs:edges_in_inner_triang_inner_quadrang} yields the following upper bound on the number of edges.

\begin{proposition}[{\cite[Theorem 3.3]{bernhart1979book}}]
\label{obs:general_upper_bound_convex}
    Every convex $\Theta^k$-drawing of a graph~$G$ on $n \geq 3$~vertices contains at most $n+k(n-3)$~edges. 
\end{proposition}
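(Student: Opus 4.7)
The plan is to bound the edges of each color class separately and then aggregate, taking care that outer edges are shared across the drawing while inner edges are partitioned among the color classes.

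Fix a convex $\Theta^k$-drawing $\Gamma$ of $G$ on $n \geq 3$ vertices with color classes $E_1, \dots, E_k$. Since the vertices are in convex position and the color class $E_i$ has no monochromatic crossings, the straight-line drawing induced by $E_i$ is a plane drawing of an outerplane graph (all vertices lie on the convex hull). I want to argue that each $E_i$ contains at most $n-3$ inner edges. To see this, I complete $E_i$ to an inner triangulation $T_i$: first add any missing edges of the outer cycle (these lie on the convex hull and hence cross neither each other nor any inner segment), and then triangulate every inner face of the resulting graph by adding straight chords (possible since each inner face is a convex polygon). This produces an inner triangulation $T_i$ whose inner edges contain the inner edges of $E_i$. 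By \cref{obs:edges_in_inner_triang_inner_quadrang} with $\ell = 3$, $T_i$ has exactly $n-3$ inner edges, so $E_i$ has at most $n-3$ inner edges.

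Now split $|E(G)| = O + I$, where $O$ counts outer edges of $\Gamma$ and $I$ counts inner edges. Clearly $O \leq n$, since the outer cycle of $\Gamma$ has exactly $n$ edges. For the inner edges, since the color classes partition $E(G)$, we get $I = \sum_{i=1}^{k} I_i$ where $I_i$ is the number of inner edges of color $i$; by the previous paragraph, $I_i \leq n-3$ for every $i$. Combining the two estimates yields $|E(G)| \leq n + k(n-3)$, which is the desired bound.

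There is no real obstacle here; the only subtle point is justifying the per-color bound of $n-3$ inner edges, which needs the observation that one can augment any monochromatic color class in convex position to an inner triangulation without introducing new crossings in that color.
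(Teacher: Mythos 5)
Your proof is correct and follows essentially the same route as the paper: both arguments cover the drawing by $k$ inner triangulations that pairwise share only the outer cycle and then apply \cref{obs:edges_in_inner_triang_inner_quadrang} with $\ell = 3$ to get $n-3$ inner edges per color class plus $n$ outer edges. The augmentation step you spell out (adding the outer cycle and triangulating the convex inner faces of each color class) is exactly the justification the paper leaves implicit.
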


We now construct a precolored saturated drawing with a smaller number of edges than implied by \cref{obs:general_upper_bound_convex}, thereby obtaining a smaller upper bound for precolored min-saturated drawings. 
We say that some diagonals~$M$ of a convex $\Theta^k$-drawing of a graph on $n$~vertices form a \emph{nice matching} if these diagonals together with the outer cycle form an outerplane graph~$H$ whose dual is a path (ignoring the outer face) and where the faces corresponding to the beginning and end of the path are faces of size~$3$ or~$4$, and all other faces have size~$4$, see \cref{fig:convex_k_min_upper_bound-1}.
\begin{figure}
    \centering
    \subcaptionbox{\label{fig:convex_k_min_upper_bound-1}}{\includegraphics[page=1]{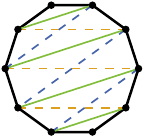}}
    \hfil
    \subcaptionbox{\label{fig:convex_k_min_upper_bound-2}}{\includegraphics[page=2]{figures/convex_k_min_upper_bound.pdf}}
    \hfil
    \subcaptionbox{\label{fig:convex_k_min_upper_bound-3}}{\includegraphics[page=4]{figures/convex_k_min_upper_bound.pdf}}
    \hfil
    \subcaptionbox{\label{fig:convex_k_min_upper_bound-4}}{\includegraphics[page=5]{figures/convex_k_min_upper_bound.pdf}}
    \caption{(\subref{fig:convex_k_min_upper_bound-1}) A nice matching~$M$ (green) and diagonals that could be added to a color class containing~$M$ (dashed). (\subref{fig:convex_k_min_upper_bound-2})
    The  two nice matchings~$L(M)$ (left) and~$R(M)$ (right). (\subref{fig:convex_k_min_upper_bound-3}) A precolored $\Theta^3$-zigzag~$\Gamma$. (\subref{fig:convex_k_min_upper_bound-4})   
    Recoloring yields a $\Theta^3$-drawing which contains $\Gamma$ and two more edges (dashed).}
    \label{fig:convex_k_min_upper_bound}
\end{figure}
If all these diagonals belong to the same color class $E_i$, then $E_i$ can only be extended by adding missing diagonals within the faces of~$H$.
The missing diagonals may again be decomposed into two nice matchings, which we call the \emph{left} and \emph{right tilt} of~$M$, denoted by~$L(M)$ and~$R(M)$, respectively, see \cref{fig:convex_k_min_upper_bound-2}.
In particular, we have~$R(L(M)) = M$ and~$L(R(M)) = M$.
We can now construct a saturated $\Theta^k$-drawing~$\Gamma$ of a graph~$G$
on $n$~vertices with an edge-coloring~$\varphi\colon E(G) \to \set{1, \dots, k}$ such that the following holds (see \cref{fig:convex_k_min_upper_bound-3} for an example):
\begin{itemize}
    \item The outer cycle is part of~$\Gamma$
    \item The inner edges of $E_1$ correspond to two nice matchings~$M_1$ and~$L(M_1)$
    \item The inner edges of $E_{i}$ form a nice matching~$M_i = R(M_{i-1})$, for $i=2,\dots,k-1$ 
    \item The inner edges of $E_k$ correspond to two nice matchings~$M_k = R(M_{k-1})$ and~$R(M_k)$
\end{itemize}
We call the obtained precolored drawing a \emph{precolored $\Theta^k$-zigzag}~$\Gamma$ on $n$~vertices.
Here, no $E_i$ can be extended as all the edges that could be added to~$E_i$ are part of some $E_j$ with $j \not= i$.
That is, the $\Theta^k$-zigzag~$\Gamma$ is a precolored saturated drawing.
For $k \leq \frac{n}{2}$ we have $E_i \cap E_j = \emptyset$, i.e., disjoint edge sets and~$\Gamma$ is well-defined.

\begin{proposition}
\label{prop:upper_bound_precoled_min_sat}
    Every min-saturated convex precolored $\Theta^k$-drawing of a graph~$G$ on $n \geq 5$~vertices (with $k \leq \frac{n}{2}$) contains at most $\frac{1}{2}(k+4)(n-2)$~edges.
\end{proposition}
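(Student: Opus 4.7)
The plan is to exhibit a single saturated precolored $\Theta^k$-drawing with at most $\tfrac{1}{2}(k+4)(n-2)$ edges — namely the $\Theta^k$-zigzag $\Gamma$ described just above the statement — since the min-saturated one can only have fewer edges. The argument then boils down to a short double-count once a structural identity for nice matchings is in place.

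The key lemma I would prove first is that $|M| + |L(M)| = |M| + |R(M)| = n - 3$ for every nice matching $M$ on $n$ vertices. To see this, let $t_M \in \{0,1,2\}$ denote the number of triangular end faces of $M$ together with the outer cycle, and let $f$ be the number of inner faces. Euler's formula gives $f = |M|+1$, and double-counting edge--face incidences (exactly as in the derivation preceding \cref{obs:edges_in_inner_triang_inner_quadrang}) gives $3t_M + 4(f-t_M) = n + 2|M|$, so $|M| = \tfrac{n-4+t_M}{2}$. Since $L(M)$ and $R(M)$ each select exactly one of the two diagonals of every quadrangular face, $|L(M)| = |R(M)| = f - t_M = \tfrac{n-2-t_M}{2}$, and summing proves the identity.

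With the identity in hand I would count the edges of $\Gamma$. The outer cycle contributes $n$ edges, and the inner edges total $|L(M_1)| + \sum_{i=1}^{k} |M_i| + |R(M_k)|$. Substituting $|L(M_1)| = n - 3 - |M_1|$ and $|R(M_k)| = n - 3 - |M_k|$ collapses this to
\[ 3n - 6 + \sum_{i=2}^{k-1} |M_i| \;\leq\; 3(n-2) + (k-2)\cdot\tfrac{n-2}{2} \;=\; \tfrac{1}{2}(k+4)(n-2), \]
where the bound $|M_i| \leq \tfrac{n-2}{2}$ is a direct consequence of the identity. Saturation of $\Gamma$ was already argued immediately before the statement: the only diagonals that could extend $E_i$ lie in $L(M_i) \cup R(M_i)$, and by the recursion $M_{i+1} = R(M_i)$ together with $L(R(M)) = M$ these are all claimed by a neighbouring color class.

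The main obstacle I expect is not the algebra but feasibility of the construction for the full stated parameter range: I need to check that one can start from a suitable nice matching $M_1$ and iterate the tilt operation $R$ a total of $k-1$ times without the resulting matchings wrapping around the cycle and colliding. The hypothesis $k \leq n/2$ is precisely what rules out such collisions (as alluded to in the construction sketch). Once this feasibility is secured, the two-step calculation above closes the proof.
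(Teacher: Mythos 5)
Your proposal is correct and follows essentially the same route as the paper: exhibit the precolored $\Theta^k$-zigzag as a saturated drawing and count its edges as $n + 2(n-3) + (k-2)\cdot\tfrac{1}{2}(n-2)$, the only cosmetic difference being that you package the contribution of $E_1$ and $E_k$ via the identity $|M| + |L(M)| = n-3$, where the paper instead applies \cref{obs:edges_in_inner_triang_inner_quadrang} to the inner triangulations $M_1 \cup L(M_1)$ and $M_k \cup R(M_k)$ directly. The well-definedness and saturation of the zigzag for $k \leq \tfrac{n}{2}$, which you flag as the remaining obstacle, is likewise only asserted in the paper's construction preceding the proposition, so your treatment matches the paper's own level of detail.
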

\begin{proof}
Consider the precolored $\Theta^k$-zigzag on $n$~vertices.
By \cref{obs:edges_in_inner_triang_inner_quadrang}, $E_1$ and~$E_k$ contain at most $n-3$ inner edges respectively.
Every nice matching together with the outer cycle is an inner quadrangulation except for at most two faces of complexity 3.
A similar argument as in \cref{obs:edges_in_inner_triang_inner_quadrang} shows that every nice matching contains at most $\frac{1}{2}(n-2)$~edges.
Summing up the number of edges of the outer cycle ($n$ edges), the inner edges of~$E_1$ and~$E_k$ ($2(n-3)$), and the edges of the $k-2$ nice matchings~$E_i$
($\frac{1}{2}(n-2)$ each) yields the desired bound.
\end{proof}
Yet, this upper bound does not yield an upper bound for non-precolored drawings.
Indeed, a $\Theta$-zigzag (without the edge-coloring) is not necessarily saturated, cf. \cref{fig:convex_k_min_upper_bound-3} and \cref{fig:convex_k_min_upper_bound-4}.

Recall that every color class of a~$\Theta^k$-drawing together with the outer cycle forms an outerplane graph. 
For max-saturated precolored drawings, the inner faces of these outerplane graphs cannot have arbitrarily large size:

\begin{restatable}[\restateref{obs:face_complexity}]{lemma}{facecomplexity}
\label{obs:face_complexity}
    If $\Gamma$ is a saturated precolored convex $\Theta^k$-drawing, then each color class of~$\Gamma$ together with the outer cycle forms an outerplane drawing where each inner face has size at most~$2k-1$.
\end{restatable}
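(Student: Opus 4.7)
The plan is to argue by contradiction: suppose the outerplane graph $H_i$ formed by $E_i$ together with the outer cycle has an inner face $F$ whose boundary cycle has length $m \geq 2k$, and find $k$ pairwise crossing chords of $F$ that must all avoid color $i$. My first task is to pin down the geometry of $F$: since all vertices lie in convex position and the outer cycle is the convex hull, the boundary of any inner face of $H_i$ visits its vertices in convex-hull cyclic order, so $F$ is (the interior of) the convex polygon spanned by its $m$ boundary vertices. Two consequences I will need: (i) the straight-line segment between any two non-adjacent boundary vertices of $F$ lies entirely inside $F$, and (ii) by definition of a face, no edge of $E_i$ meets this interior.

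Next I would exhibit the $k$ pairwise crossing chords. Label $2k$ boundary vertices of $F$ in cyclic order as $v_1, \dots, v_{2k}$ and set $d_j := v_j v_{j+k}$ for $j = 1, \dots, k$. A direct check of the cyclic order of endpoints confirms that any two of these chords have interleaving endpoints and hence cross inside $F$. Now I invoke saturation: for each $j$, if $d_j \notin E(G)$ then adding $d_j$ to $\Gamma$ with color $i$ creates no monochromatic crossing, since by~(ii) no $E_i$-edge enters the interior of $F$ and by~(i) the whole segment $d_j$ lies there. This would contradict that $\Gamma$ is saturated, so every $d_j$ is already an edge of $G$.

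The contradiction now comes from counting colors: each $d_j$ lies in the interior of $F$, so it cannot belong to $E_i$; and since the $d_j$ pairwise cross in $\Gamma$, they must use pairwise distinct colors. This would require $k$ distinct colors drawn from the $(k-1)$-element set $\{1, \dots, k\} \setminus \{i\}$, the desired contradiction, yielding $m \leq 2k-1$. The step I expect to require the most care is the geometric observation that $F$ is literally a convex subpolygon of the convex hull (and in particular that its combinatorial diagonals are realized as straight-line chords of its interior); once that is established, the pairwise crossing check and the pigeonhole on colors are immediate.
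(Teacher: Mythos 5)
Your proof is correct and follows essentially the same route as the paper's: identify $k$ pairwise crossing diagonals of a face of size $\geq 2k$, observe none can have the face's color, and derive a contradiction from the $k-1$ remaining colors. The only cosmetic difference is that the paper phrases the contradiction as "some diagonal is missing and can be added in the face's color," whereas you first force all $k$ diagonals to be present via saturation and then run out of colors; you also spell out the explicit construction $d_j = v_jv_{j+k}$ and the convexity of the face, which the paper leaves implicit.
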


Thus, by \cref{obs:face_complexity}, we can cover the edges of a saturated precolored $\Theta^k$-drawing with $k$~outerplane graphs, each of which contains an inner $(2k-1)$-angulation that contains the edges of the outer cycle.
An application of \cref{obs:edges_in_inner_triang_inner_quadrang} yields the following.

\begin{theorem}
\label{obs:convex_lower_bound_k}
    Every min-saturated convex precolored $\Theta^k$-drawing of a graph on $n \geq 2k-1$ vertices contains at least $\frac{k(n-2k+1)}{2k-3}+n$~edges.
\end{theorem}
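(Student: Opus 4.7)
The plan is to decompose the edge count of a saturated precolored convex $\Theta^k$-drawing $\Gamma$ into $n$ outer-cycle edges plus a per-color lower bound on the diagonals, and then apply \cref{obs:face_complexity} together with the same Euler-type argument that underlies \cref{obs:edges_in_inner_triang_inner_quadrang}.

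First, I would verify that in a saturated precolored convex $\Theta^k$-drawing every edge of the outer cycle is present. Any outer-cycle edge $uv$, drawn as a straight segment, lies on the boundary of the convex hull and therefore cannot cross any other edge of $\Gamma$, so it could be inserted with any color of the given precoloring without creating a monochromatic crossing. By saturation, $uv$ must already belong to $\Gamma$, and so $\Gamma$ contains exactly $n$ outer edges, partitioned among the color classes by $\varphi$.

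Next I would bound the diagonals contributed by each individual color class. Fix $i \in \{1,\dots,k\}$ and let $H_i$ denote the outerplane graph obtained by adding the full outer cycle to $E_i$. By \cref{obs:face_complexity}, every inner face of $H_i$ has size at most $2k-1$. Writing $m_i$ for the edge count and $f_i$ for the face count of $H_i$, Euler's formula gives $f_i = m_i - n + 2$, while summing face sizes yields $2m_i = n + \sum_{F\text{ inner}} |F| \le n + (2k-1)(f_i-1)$. Combining and solving for $m_i$ produces exactly the same inequality used in the proof of \cref{obs:edges_in_inner_triang_inner_quadrang}, so the number of diagonals of $H_i$ satisfies $m_i - n \ge \frac{n - (2k-1)}{2k-3} = \frac{n-2k+1}{2k-3}$.

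Finally, I would sum over all $k$ color classes. Since the coloring partitions the edges of $\Gamma$, each diagonal is counted in exactly one $H_i$, so the total number of diagonals is at least $\frac{k(n-2k+1)}{2k-3}$, and adding the $n$ outer edges yields the claimed bound. The hypothesis $n \ge 2k-1$ merely ensures that this quantity is nonnegative. The only subtlety -- not really an obstacle -- is that \cref{obs:edges_in_inner_triang_inner_quadrang} is stated for exact inner $\ell$-angulations, but the face-counting inequality carries over verbatim when inner faces have size \emph{at most} $\ell$, as smaller faces can only force more edges. The bulk of the work is thus already done by \cref{obs:face_complexity}; the remainder is a clean double-counting step.
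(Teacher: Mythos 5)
Your proposal is correct and matches the paper's argument: the paper likewise derives the bound by applying \cref{obs:face_complexity} to each color class together with the outer cycle and then invoking the edge count of \cref{obs:edges_in_inner_triang_inner_quadrang} with $\ell = 2k-1$, summed over the $k$ classes plus the $n$ outer edges. Your explicit verification that all outer-cycle edges are present under saturation, and your remark that the Euler-type inequality holds when faces have size \emph{at most} $\ell$, merely spell out details the paper leaves implicit.
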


Note that, since the upper bounds implied by \cref{obs:general_upper_bound_convex} and  \cref{prop:upper_bound_precoled_min_sat} and the lower bound of \cref{obs:convex_lower_bound_k} coincide for $\Theta^2$-drawings, we obtain the following.

\begin{corollary}
\label[corollary]{obs:lower_bound_2}
    Every saturated (precolored) convex $\Theta^2$-drawing~$\Gamma$ of a graph~$G$ on $n \geq 3$ vertices contains exactly $3n-6$ edges.
\end{corollary}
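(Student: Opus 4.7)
The plan is a straightforward sandwich argument exploiting the coincidence of the two bounds at $k=2$. The upper bound is immediate: since a saturated convex $\Theta^2$-drawing on $n$ vertices is in particular a convex $\Theta^2$-drawing, \cref{obs:general_upper_bound_convex} gives at most $n + 2(n-3) = 3n-6$ edges.

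For the matching lower bound in the precolored case, I would plug $k=2$ into \cref{obs:convex_lower_bound_k}: the hypothesis $n \geq 2k-1 = 3$ is exactly what is assumed in the corollary, and the bound evaluates to
\[
\frac{2(n-3)}{1} + n \;=\; 3n - 6.
\]
So every saturated convex precolored $\Theta^2$-drawing has at least $3n-6$ edges, and combined with the upper bound it has exactly $3n-6$.

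For the non-precolored case I would reduce to the precolored one. The key observation is that a non-precolored saturated $\Theta^k$-drawing $\Gamma$ remains saturated under \emph{any} valid $k$-coloring $\varphi$ of $\Gamma$: if some $uv$ could be added in the precolored setting with coloring $\varphi$, then $\Gamma + uv$ would be a valid $\Theta^k$-drawing, contradicting the non-precolored saturation of $\Gamma$. Since $\Gamma$ has geometric thickness at most $2$, at least one such coloring $\varphi$ exists, and the precolored case then yields the lower bound of $3n-6$.

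I don't expect a real obstacle here; the only subtle point is the logical relation between the two saturation notions (non-precolored saturation implies precolored saturation for every valid coloring, but not vice versa), which deserves to be stated explicitly so that the reduction to \cref{obs:convex_lower_bound_k} is transparent. Everything else is numerical substitution into the already-established bounds.
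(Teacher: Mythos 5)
Your proof is correct and matches the paper's justification, which likewise obtains the corollary by observing that the upper bound of \cref{obs:general_upper_bound_convex} and the lower bound of \cref{obs:convex_lower_bound_k} both evaluate to $3n-6$ at $k=2$. Your explicit remark that non-precolored saturation implies precolored saturation for every valid coloring is a useful clarification that the paper leaves implicit, but it does not change the argument.
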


Note that the number of edges of saturated convex $\Theta^2$-drawings only depends on the number of vertices, that is, min- and max-saturated $\Theta^2$-drawings coincide.
This is different from other results related to saturation problems.
For example, there are saturated $2$-planar drawings of graphs on $n$~vertices that contain only $1.33n$~edges \cite{auer2013onSparseMaximal}, while the maximum number of edges in saturated $2$-planar drawings is $5n$ \cite{pach1997graphs}. 
In particular, \cref{obs:lower_bound_2} shows that even if we fix the edge-coloring that certifies geometric thickness~$k$ (when considering precolored drawings), the number of edges in every saturated convex $\Theta^2$-drawing is $3n-6$.

\subsection{Edge-Density of Saturated Convex \texorpdfstring{$\bm{\Theta^3}$}{Theta-3}-Drawings}
\label{subsec:small_k}
With $k=2$ being covered by the general bounds of the previous section, we now turn to~$k=3$.
In the case of $\Theta^3$-drawings, we can strengthen the result of \cref{obs:face_complexity} as follows.

\begin{lemma}
\label{lem:convex_3-drawing_color_class_is_quadrangulation}
    If~$\Gamma$ is a saturated precolored convex $\Theta^3$-drawing, then each color class of~$\Gamma$ and the outer cycle forms an outerplane drawing~$\Gamma'$ where all inner faces have size at most~$4$.
\end{lemma}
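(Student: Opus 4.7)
The plan is to argue by contradiction, sharpening \cref{obs:face_complexity} in the case $k=3$: that lemma already bounds each inner face by $2k-1=5$, so the only task is to exclude faces of size exactly~$5$. Assume that some color class $E_i$ (say $E_1$) together with the outer cycle $C$ has an inner face $F$ of size $5$ with boundary vertices $v_1,\dots,v_5$ in cyclic order. Because $\Gamma$ is convex, $F$ is the intersection of the half-planes defined by its bounding chords and is therefore a convex pentagon whose boundary order coincides with the hull order of $v_1,\dots,v_5$.

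The main step is to show that every diagonal of $F$ must already be present in $\Gamma$. Fix a diagonal $d=v_iv_j$ of $F$. Its open segment lies strictly inside $F$, and since $F$ is a face of $E_1\cup C$, no edge of $E_1$ can meet this interior. Hence colouring $d$ with colour $1$ would create no monochromatic color-$1$ crossing, so saturation forces $d\in E(\Gamma)$. Moreover, $d\notin E_1$, since any $E_1$-edge through the interior of $F$ would subdivide $F$ and prevent it from being a face of $E_1\cup C$. Thus all five diagonals of $F$ are drawn in $\Gamma$ and carry colours from $\{2,3\}$.

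The finishing step is a short parity obstruction. Using the standard criterion that two chords of a convex polygon cross iff their endpoints interleave, one verifies that the five diagonals of a convex pentagon form a pentagram whose crossing graph is isomorphic to the odd cycle $C_5$. Since $\Gamma$ has no monochromatic crossings, the restriction of $\varphi$ to the five diagonals would be a proper $2$-colouring of $C_5$, which is impossible because $\chi(C_5)=3$. This contradicts the existence of $F$, and since the argument applies to every color class, every inner face has size at most $4$. The delicate part is the first step, where saturation is used to conclude that all missing diagonals of $F$ must be present; once this is established, the odd-cycle obstruction closes the proof immediately.
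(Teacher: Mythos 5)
Your proof is correct and follows essentially the same route as the paper's: force all five diagonals of the offending face into the drawing via saturation (they cannot carry the face's own colour), and then observe that their crossing graph is the odd cycle $C_5$, which admits no proper $2$-colouring by the remaining two colours. You actually spell out the saturation step more explicitly than the paper does; the only cosmetic difference is that you treat faces of size exactly~$5$ by invoking \cref{obs:face_complexity}, whereas the paper handles faces of size at least~$5$ directly by picking five boundary vertices.
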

\begin{proof}
    Let~$\Gamma$ be a saturated precolored convex $\Theta^3$-drawing with colors \emph{\blue}, \emph{\green} and \emph{\red} and let~$\Gamma'$ be the outerplane drawing induced by the \red edges and the outer cycle.
    Suppose some inner face~$f$ of~$\Gamma'$ contains at least five vertices~$v_1, \dots, v_5$.
    Each diagonal~$v_iv_j$ with $i\neq j$ is colored in \blue or \green.
    Note that the conflict graph~$H$ whose vertices are the diagonals~$v_iv_j$ and whose edges are pairs of crossing diagonals is a $5$-cycle, see \cref{fig:thickness_3_convex_free} for an example.
    \begin{figure}
        \centering
           \subcaptionbox{\label{fig:thickness_3_convex_free-1}}{\includegraphics[page=1]{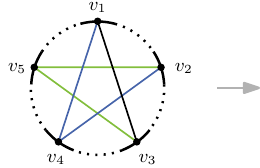}}
          \subcaptionbox{\label{fig:thickness_3_convex_free-2}}{\includegraphics[page=2]{figures/thickness_3_convex_free.pdf}}
        \caption{(\subref{fig:thickness_3_convex_free-1}) Five vertices on a face of size at least~$5$ in~$\Gamma_{\mathrm{r}}$ and their diagonals in~$\Gamma$. The black edge is not present ins~$\Gamma$. (\subref{fig:thickness_3_convex_free-2}) The corresponding vertex-coloring of the conflict graph~$H$.}
        \label{fig:thickness_3_convex_free}
    \end{figure}
    Yet, the $2$-edge-coloring of the diagonals induces a proper $2$-vertex coloring of the $5$-cycle~$H$, a contradiction.
    Thus, every inner face has size at most~$4$.
\end{proof}

Thus, every color class of a saturated convex precolored $\Theta^3$-drawing together with the outer cycle forms an outerplane drawing that contains an inner quadrangulation.
Now \cref{obs:edges_in_inner_triang_inner_quadrang} yields the following.
\begin{theorem}
    Every saturated precolored convex $\Theta^3$-drawing~$\Gamma$ of a graph~$G$ on $n \geq 3$ vertices contains at least $\frac{5}{2}n-6$ edges.
\end{theorem}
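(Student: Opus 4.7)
My plan is to leverage \cref{lem:convex_3-drawing_color_class_is_quadrangulation} together with a separate saturation argument that forces the entire outer cycle into $G$; once both are in hand, a brief double-counting per color class via \cref{obs:edges_in_inner_triang_inner_quadrang} finishes the proof.

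The first step is to observe that every one of the $n$ outer-cycle edges must belong to $E(G)$. Any edge $uv$ along the convex hull crosses no other edge of $\Gamma$, so it can be drawn and assigned any color without creating a monochromatic crossing. If such a $uv$ were missing from $E(G)$, then extending $\Gamma$ by $uv$ would produce a precolored $\Theta^3$-drawing using the original coloring together with any valid color for $uv$, contradicting saturation. Hence $G$ contains all $n$ outer-cycle edges.

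Next I handle each color class in turn. For a fixed $i \in \{1,2,3\}$, let $\Gamma_i'$ denote the outerplane drawing consisting of $E_i$ together with the outer cycle. By \cref{lem:convex_3-drawing_color_class_is_quadrangulation}, every inner face of $\Gamma_i'$ has size at most $4$. A short double-counting of edge–face incidences together with Euler's formula (exactly the reasoning underlying \cref{obs:edges_in_inner_triang_inner_quadrang} for $\ell=4$) shows that any outerplane graph on $n$ vertices whose inner faces all have size at most~$4$ contains at least $\frac{n-4}{2}$ inner edges, the minimum being attained by an inner quadrangulation. Since the outer cycle contributes no inner edges to $\Gamma_i'$, at least $\frac{n-4}{2}$ of those inner edges lie in $E_i$. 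Summing over the three color classes and noting that inner edges of different colors are distinct edges of $G$, we obtain at least $\frac{3(n-4)}{2}$ inner edges in total. Adding the $n$ outer-cycle edges from the previous step yields $|E(G)| \geq n + \frac{3(n-4)}{2} = \frac{5n-12}{2} = \frac{5}{2}n - 6$, as required.

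The only delicate ingredient is the saturation argument in the first step: without forcing the outer cycle into $G$, the subsequent counting falls short of the target bound by exactly $n$. Everything else is a direct consequence of \cref{lem:convex_3-drawing_color_class_is_quadrangulation} and the edge-face-incidence bookkeeping behind \cref{obs:edges_in_inner_triang_inner_quadrang}.
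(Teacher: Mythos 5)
Your proof is correct and follows essentially the same route as the paper: apply \cref{lem:convex_3-drawing_color_class_is_quadrangulation} to each of the three color classes, count at least $\frac{n-4}{2}$ inner edges per class via the incidence count behind \cref{obs:edges_in_inner_triang_inner_quadrang}, and add the $n$ outer-cycle edges. The only difference is that you make explicit the (correct) saturation argument forcing all hull edges into $G$, which the paper leaves implicit.
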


If a $\Theta^3$-drawing is saturated for every $3$-edge-coloring (with no monochromatic crossings), the lower bound on the number of edges can be improved.

\begin{theorem}
\label{obs:lower_bound_3}
    Every saturated convex~$\Theta^3$-drawing of a graph~$G$ on $n \geq 3$ vertices contains at least $\frac{7}{2}n-8$~edges.
\end{theorem}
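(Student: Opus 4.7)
The plan is to reduce the theorem to a combinatorial inequality on face counts, then squeeze that inequality by exploiting the full (non-precolored) saturation through a recoloring argument.

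First I would set up the identity $|E| = 4n - 9 - Q$, where $Q = f_1^4 + f_2^4 + f_3^4$ is the total number of quadrilateral inner faces across the three color classes. By \cref{lem:convex_3-drawing_color_class_is_quadrangulation}, each color class $C_c$ together with the outer cycle is an outerplane graph all of whose inner faces have size $3$ or $4$; adding one diagonal per quadrilateral face yields an inner triangulation with $n-3$ inner edges, so $n_c = n - 3 - f_c^4$, and summing over $c$ plus the $n$ outer edges gives the identity. Hence the claim $|E| \geq \frac{7}{2}n - 8$ reduces to proving $Q \leq (n-2)/2$.

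The crucial structural step uses non-precolored saturation via a recoloring trick. Fix any quadrilateral face $q$ of $C_i$ with corners $a,b,c,d$ in cyclic hull order and diagonals $ac \in C_j$, $bd \in C_k$ (both present in $\Gamma$ by saturation, and in distinct colors $\neq i$ since they cross). Recolor $ac$ from $C_j$ to $C_i$; this is a valid $\Theta^3$-coloring because $ac$ lies in the interior of the face $q$ and hence crosses no edge of $C_i$. Since $\Gamma$ is non-precolored saturated, every valid coloring of $\Gamma$ must also be precolored-saturated, so \cref{lem:convex_3-drawing_color_class_is_quadrangulation} applies to the new coloring. The two $C_j$-faces formerly adjacent to $ac$ merge into a single face of size $s_1 + s_2 - 2 \leq 4$ with $s_1, s_2 \geq 3$, forcing $s_1 = s_2 = 3$. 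Thus both $C_j$-faces adjacent to $ac$ are triangles; symmetrically, both $C_k$-faces adjacent to $bd$ are triangles.

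From this I would run a charging scheme. Each quadrilateral face $q$ contributes four pairs $(q,T)$, where $T$ is a triangle adjacent (in its own color) to one of $q$'s diagonals, yielding $4Q$ such pairs in total. On the receiving side, each triangle $T$ of color $c$ has at most three inner edges, and each inner edge is a diagonal of at most two quadrilaterals (at most one per color $\neq c$, since a fixed edge determines the opposite corners and hence, if it exists, the unique face in each color). The nontrivial part is sharpening this charging to obtain exactly $Q \leq (n-2)/2$: a naive bound only gives $Q \leq (3n-9)/4$. To get the tight inequality one has to further exploit the structure produced by the recoloring: the merged $C_j$-quadrilateral $a x c y$ (with $x,y$ the apexes of the two triangles adjacent to $ac$) has $xy$ as its second diagonal, and by the saturation argument applied to the new coloring together with the fact that $xy$ cannot cross $bd \in C_k$, one forces $xy \in C_k$ and constrains the apexes $x,y$ to lie on a common ``side'' of $q$. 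This rigidity rules out most sharings of triangles between different quadrilaterals. Combining this with the identity $T + 2Q = 3n - 6$ then gives $T \geq 2n - 4$, equivalently $Q \leq (n-2)/2$, as needed. The hardest part of the proof is precisely this last sharpening: turning the geometric/algebraic rigidity produced by the recoloring into the clean combinatorial inequality.
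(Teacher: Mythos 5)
Your reduction is set up correctly: the identity $|E| = 4n-9-Q$ follows from \cref{lem:convex_3-drawing_color_class_is_quadrangulation} together with \cref{obs:edges_in_inner_triang_inner_quadrang}, and the theorem is indeed equivalent to $Q \leq \frac{n-2}{2}$. The local recoloring observation is also sound and genuinely clever: full (non-precolored) saturation means \cref{lem:convex_3-drawing_color_class_is_quadrangulation} applies to the recolored drawing as well, which forces both $C_j$-faces adjacent to a quadrilateral's diagonal to be triangles. But the proof has a genuine gap exactly where you flag it: the inequality $Q \leq \frac{n-2}{2}$ is never established. The charging scheme as described (each quadrilateral charges the four triangles adjacent to its diagonals; each triangle has at most three inner edges, each a diagonal of at most two quadrilaterals) yields a bound strictly weaker than $\frac{n-2}{2}$, as you concede, and the ``rigidity'' that is supposed to close the gap --- constraining the apexes $x,y$, forcing $xy$ into $C_k$, and ruling out sharings of triangles between quadrilaterals --- is asserted rather than proved; no argument is given that converts this local structure into the global inequality $T \geq 2n-4$. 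Declaring the decisive step ``the hardest part'' does not discharge it, so as written the proposal proves only a weaker edge bound.

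For comparison, the paper reaches the same conclusion with a global rather than local recoloring, sidestepping the counting difficulty entirely. It greedily recolors \red edges until the \blue and \green edges together with the outer cycle form a saturated convex $\Theta^2$-drawing (saturation of $\Gamma$ guarantees that no genuinely new edges are needed, only recolorings), which by \cref{obs:lower_bound_2} has exactly $3n-6$ edges; then, since $\Gamma$ is saturated for \emph{every} coloring, \cref{lem:convex_3-drawing_color_class_is_quadrangulation} still applies to the remaining \red edges, which therefore contain an inner quadrangulation and contribute at least $\frac{n}{2}-2$ further inner edges by \cref{obs:edges_in_inner_triang_inner_quadrang}. If you want to salvage your route, you would need to actually carry out the sharpened charging; otherwise the two-step global recoloring is the much shorter path.
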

\begin{proof}
    Let $\Gamma$ be a saturated convex~$\Theta^3$-drawing of $G$.
    That is, no edge can be added to~$\Gamma$, independent of the $3$-edge-coloring we consider.
    We call the three colors of a corresponding edge-coloring of~$\Gamma$ \emph{\blue}, \emph{\green}, and \emph{\red}. 
    Greedily adding missing diagonals in \blue or \green, we may assume that the union of the \blue edges, the \green edges, and the outer cycle is a saturated $\Theta^2$-drawing~$\Gamma'$. 
    In fact, as $\Gamma$ is saturated, we only recolor some \red edges in the process.
    By \cref{obs:lower_bound_2}, the subdrawing~$\Gamma'$ contains $3n-6$~edges. 

    It remains to show that there are at least $\frac{n}{2}-2$ \red inner edges.
    As $\Gamma$ is saturated (for every coloring), the \red edges together with the outer cycle form a drawing that contains an inner quadrangulation (cf. \cref{lem:convex_3-drawing_color_class_is_quadrangulation}).
    Thus, by \cref{obs:edges_in_inner_triang_inner_quadrang}, there are at least $\frac{n}{2}-2$ \red inner~edges.
\end{proof}

\section{Moving towards non-convexity in the free setting for \texorpdfstring{$\bm{k = 2}$}{k=2}}
\label{sec:non-convex-free}

In this section, we consider the more general case where the vertices of $G$ are not necessarily in convex position.
We show that, for $k = 2$, the lower bound from \Cref{sec:convex-free} (cf. \cref{obs:lower_bound_2}) extends to the general case, i.e., we prove the following theorem.

\begin{restatable}[\restateref{thm:non-convex-2-lower-bound}]{theorem}{nonConvex}
\label{thm:non-convex-2-lower-bound}
    Every saturated $\Theta^2$-drawing of a graph $G$ on $n \geq 3$ vertices contains at least $3n-6$ edges.
\end{restatable}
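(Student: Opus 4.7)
The plan is to extend the proof of the convex case (\cref{obs:lower_bound_2}) to arbitrary point sets. First, I would observe that every convex hull edge of $V$ must already lie in $E(\Gamma)$: such an edge crosses no other segment between vertices of $V$, so if missing it could be added in either color, contradicting saturation. Let $H$ denote this set of hull edges and $h$ the number of hull vertices. Next, for each color $i \in \{1,2\}$, I would extend $E_i \cup H$ to a triangulation $T_i$ of the point set $V$ by repeatedly adding a missing diagonal in any non-triangular face. Any such diagonal lies inside a face of $E_i \cup H$, hence crosses no $E_i$-edge; by saturation it must already be in $\Gamma$, and since it is outside $E_i$, it belongs to $E_{3-i}$. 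Thus $E_i \cup H \subseteq T_i \subseteq E(\Gamma)$.

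In particular, $E(\Gamma) = T_1 \cup T_2$, and since each triangulation of $V$ has exactly $3n-3-h$ edges,
\[
|E(\Gamma)| \;=\; |T_1| + |T_2| - |T_1 \cap T_2| \;=\; 2(3n-3-h) - |T_1 \cap T_2|.
\]
Thus the theorem becomes equivalent to the bound $|T_1 \cap T_2| \leq 3n - 2h$. A key structural observation will ease the final count: every non-hull edge $d \in T_1 \cap T_2$ is in fact non-crossing in the whole drawing $\Gamma$. Indeed, $d$ lies in some color class $E_c$ (and hence crosses no other $E_c$-edge, as color classes are plane), while its inclusion in the triangulation $T_{3-c}$, obtained by a planar extension of $E_{3-c} \cup H$, forces $d$ to cross no $E_{3-c}$-edge either. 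Consequently $T_1 \cap T_2$ is a planar straight-line graph on $V$ consisting of the hull together with crossing-free interior edges of $\Gamma$.

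The main obstacle is to sharpen the trivial planar bound $|T_1 \cap T_2| \leq 3n-6$, which is too weak once $h > 3$. The approach I would take is to use that within each face of $T_1 \cap T_2$ the triangulations $T_1$ and $T_2$ must disagree on the diagonals they add (otherwise those diagonals would themselves lie in $T_1 \cap T_2$), so every non-triangular face admits at least two \emph{distinct} triangulations of its bounding sub-polygon with its interior vertices. Combining this with a discharging or Euler-type count on $T_1 \cap T_2$ — weighted by how many of the $n-h$ interior vertices each face contains, and leveraging saturation to conclude that every compatible chord of such a face is already an edge of $\Gamma$ — should yield $|T_1 \cap T_2| \leq 3n - 2h$. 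I expect this last double-counting step, which is the point where saturation must be exploited beyond the mere planarity of each color class, to be the technically most delicate part of the proof.
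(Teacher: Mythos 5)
Your setup is sound and genuinely different from the paper's: the hull edges must indeed be present, each $E_i\cup H$ does extend to a triangulation $T_i$ of the point set all of whose edges are forced into $\Gamma$ by saturation, and the observation that edges of $T_1\cap T_2$ are crossing-free in all of $\Gamma$ is correct. This cleanly reduces the theorem to the inequality $|T_1\cap T_2|\leq 3n-2h$. The problem is that this inequality \emph{is} the theorem in disguise, and your proposal does not actually prove it. The ``Euler-type count'' you invoke is tautological: for a plane graph $P$ containing the hull, the total deficiency $\sum_f(3i_f+k_f-3)$ over its inner faces (where $i_f$ and $k_f$ are the interior and boundary vertices of face $f$) is identically $(3n-3-h)-|E(P)|$, the number of edges needed to complete $P$ to a full triangulation; so asking that this sum be at least $h-3$ is exactly asking $|E(P)|\leq 3n-2h$ again. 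The one structural fact you extract from saturation --- that $T_1$ and $T_2$ restrict to diagonal-disjoint triangulations inside each face of $T_1\cap T_2$ --- only tells you that each non-trivial face has two distinct triangulations; it does not produce the quantity $h-3$, which must come from the geometry of the hull, and no discharging scheme is exhibited that would supply it. Since you yourself flag this step as the delicate part and leave it as an expectation, the proof has a genuine gap precisely where the difficulty lies.

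For comparison, the paper attacks the same difficulty head-on but asymmetrically: it completes the blue class to a full triangulation of the point set ($3n-3-n'$ edges, with $n'=h$) and then must show that the red class can be completed to at least $n'-3$ edges. This is done by extending every red edge to a maximal segment that crosses neither the other red edges, their earlier extensions, nor the outer cycle; the resulting arrangement $\Lambda$ decomposes the interior into convex cells, each of which can be triangulated in red, and a vertex--cell incidence count combined with an Euler-formula computation for the planarization of $\Lambda$ (\cref{prop:numIncidences,prop:numFaces}) yields $|E(\Gamma_r)|+\sum_{c}(\norm{c}-3)\geq n'-3$. Some geometric construction of this kind --- tying the red edges and the hull size together through an arrangement or sweep --- is what your final step would need to replace, and it is exactly what is missing.
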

\begin{proof}[Proof Sketch]
Let $\Gamma$ be a $\Theta^2$-drawing of $G$.
We show that we can always add additional edges to $\Gamma$ without increasing its thickness to more than 2 if $\Gamma$ contains fewer than $3n-6$ edges.
We assume that the edges of $\Gamma$ are colored \blue and \red according to an arbitrary certificate of its thickness.
Let~$n'$ be the number of vertices that lie on the outer cycle. Adding missing edges and recoloring some of the \red edges in \blue, we greedily turn the \blue edges into a plane graph where each inner face is a triangle and the outer face is bounded by the outer cycle. That is, we may assume that there are $3n-6-(n'-3)$ \blue edges by \cref{obs:edges_in_inner_triang_inner_quadrang}.
In order to obtain the desired lower bound of $3n-6$ edges, we thus need to obtain at least $n'-3$ \red edges overall.

\begin{figure}
    \centering
    \subcaptionbox{\label{fig:nonConvex-1}}{\includegraphics[page=4]{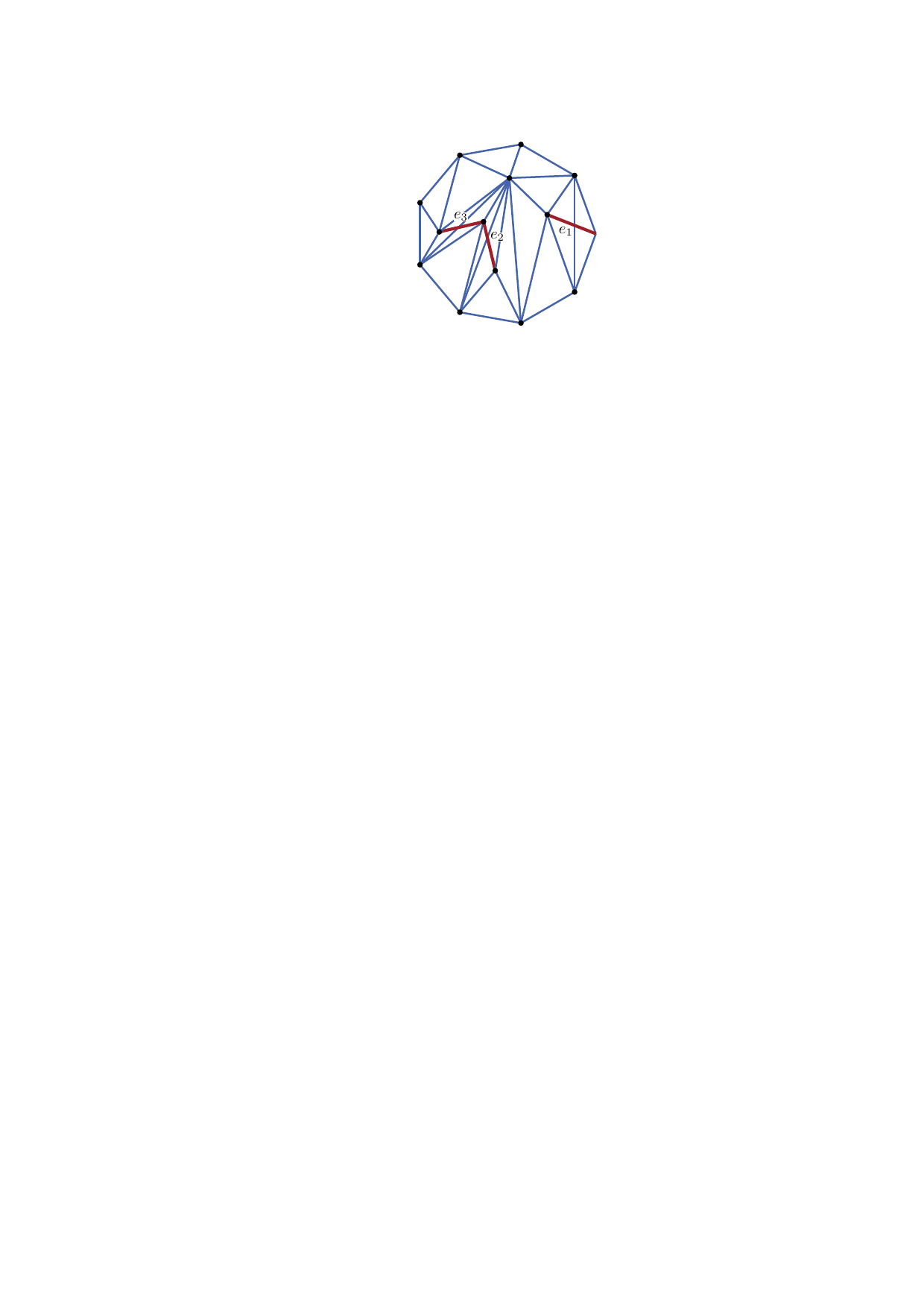}}
    \hfil
    \subcaptionbox{\label{fig:nonConvex-2}}{\includegraphics[page=5]{figures/NonConvexSketch.pdf}}
    \hfil
    \subcaptionbox{\label{fig:nonConvex-3}}{\includegraphics[page=6]{figures/NonConvexSketch.pdf}}
    \caption{(\subref{fig:nonConvex-1}) A $\Theta^2$-drawing $\Gamma$ where the blue edges form an inner triangulation. (\subref{fig:nonConvex-2}) The corresponding drawing $\Lambda$. (\subref{fig:nonConvex-3}) Triangulating each inner cell of $\Lambda$ yields seven additional red edges.
    Note that, while the red edges do not form an inner triangulation of the whole graph, we now have at least $3n-6$ edges overall as desired.}
    \label{fig:nonConvex}
\end{figure}

Consider an ordering $e_1, \dots, e_t$ of the \red edges.
We iteratively extend each edge~$e_i$ to a line segment as follows. 
We say that two line segments \emph{cross} if there exists a point $p$ that lies in the interior of both segments (i.e., $p$ is not an endpoint of either segment).
Let~$\ell_i$ be the supporting line of $e_i$. We define the \emph{edge extension of~$e_i$}, denoted~$\ext(e_i)$, as the segment of~$\ell_i$ of maximum length that contains~$e_i$ and does not cross any $e_j$ with $j \neq i$, the outer cycle, or any extension~$\ext(e_j)$ with $j < i$; see \Cref{fig:nonConvex}.
Note that, if two edge extensions~$\ext(e_i)$ and $\ext(e_j)$ share a point~$p$, then~$p$ is an endpoint of at least one of them.
We say that $\ext(e_i)$ and $\ext(e_j)$ \emph{touch} in the point~$p$.
If an extension $\ext(e_j)$ touches an extension~$\ext(e_i)$ in an inner point of~$\ext(e_i)$, we say $\ext(e_j)$ splits $\ext(e_i)$ into \emph{segments}.
Observe that every vertex that does not lie on the outer cycle lies in the interior of exactly one edge extension, but may be the endpoint of other edge extensions.

We denote by~$\Lambda$ the drawing induced by the outer cycle together with all edge extensions.
The drawing $\Lambda$ splits the plane into regions that we call \emph{cells}.
We denote by~$C(\Lambda)$ the set of inner cells of the drawing~$\Lambda$.
The \emph{boundary} of a cell~$c$ corresponds to all segments and vertices incident to~$c$. 
We let $\norm{c}$ denote the number of vertices on the boundary of $c$.

Each cell $c \in C(\Lambda)$ is convex.
Using a double counting argument for the vertex-cell incidences, we can show that the sum of these values over all cells of~$\Lambda$ plus the initial number of \red inner edges in $\Gamma$ adds up to at least $n' - 3$, the desired number of \red~edges.
\end{proof}

\section{Conclusion}\label{sec:conclusion}

We investigated saturated geometric drawings of graphs on $n$ vertices with geometric thickness~$k$. We provided upper and lower bounds on the number of edges in such drawings, and took a closer look at drawings of thickness $k=2$ and $k=3$. 
Several questions remain open, e.g., tight bounds for the convex case, and lower and upper bounds for min-saturated drawings with~$n'$ vertices on the convex hull.

\bibliography{refs}

\newpage

\appendix

\section{Full Proofs of \texorpdfstring{\cref{sec:convex-free}}{Section 2}}

\facecomplexity*
\label{obs:face_complexity*}
\begin{proof}
    Let $\Gamma$ be a (not necessarily saturated) precolored convex $\Theta^k$-drawing and let~$\Gamma'$ be the outerplane embedding induced by one color class (which we call \emph{\blue}) and the outer cycle. 
    Suppose there exists a face~$f$ of~$\Gamma'$ of size at least~$2k$.
    We need to show that~$\Gamma$ is not saturated.
    Note that there is a set of $k$~diagonals of $f$ which all pairwise intersect.
    Since no two such diagonals lie in the same color class and in particular no such diagonal is \blue, at most $k-1$~of the diagonals are part of the drawing~$\Gamma$.
    As we can add the missing diagonal in \blue to~$\Gamma$, the drawing~$\Gamma$ is not saturated.
\end{proof}

\section{Omitted Proofs of \texorpdfstring{\cref{sec:non-convex-free}}{Section 3}}

Because every cell $c \in C(\Lambda)$ is convex, the vertices on its boundary are in convex position and we can create a red inner triangulation in $c$ and, by \cref{obs:edges_in_inner_triang_inner_quadrang}, we obtain
$\max\{0, \norm{c} - 3\}$ \red inner edges between vertices on the boundary of $c$.
Therefore, our goal is to show that the sum of these values over all cells of~$\Lambda$ plus the initial number of \red inner edges in $\Gamma$ adds up to at least $n' - 3$.

Let $\Gamma_r$ be the subdrawing of $\Gamma$ induced by the \red edges.
In particular, note that $\Gamma_r$ thus only contains vertices that are incident to \red edges.
With the following two propositions, we show that 
the number of inner cells in $\Lambda$ that contain a vertex $v\in V(\Gamma_r)$ on their boundary as well as the total number of cells of $\Lambda$ can be directly obtained from~$\Gamma_r$.

\begin{lemma}
\label{prop:numIncidences}
    Every vertex $v \in V(\Gamma_r)$ lies on the boundary of $\deg_{\Gamma_r}(v) + 1$ inner cells of~$\Lambda$.
\end{lemma}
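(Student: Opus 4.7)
The plan is to count the inner cells of $\Lambda$ incident to $v$ by a local analysis: in a small disk around $v$, the edges of $\Lambda$ meeting $v$ act as rays and partition the disk into angular sectors, each of which is contained in a distinct cell of $\Lambda$ having $v$ on its boundary (when $v$ lies on the outer cycle, only sectors on the interior side of the two outer-cycle rays correspond to \emph{inner} cells). Hence it suffices to count these rays and verify that the number of inner sectors is $\deg_{\Gamma_r}(v) + 1$.

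First I would identify which edges of $\Lambda$ are incident to $v$. The edges of $\Lambda$ arise from the outer cycle and from the edge extensions $\ext(e_i)$ of red edges. By general position (no three collinear vertices), the supporting line of any red edge $e_j$ not incident to $v$ cannot pass through $v$, so $v$ cannot touch $\ext(e_j)$ at all. Thus only the $\deg_{\Gamma_r}(v)$ extensions corresponding to red edges incident to $v$ touch $v$, and these extensions leave $v$ in $\deg_{\Gamma_r}(v)$ pairwise distinct directions (again by general position).

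Next I would split into two cases. If $v$ does not lie on the outer cycle, I would invoke the observation already recorded in the excerpt that $v$ is in the interior of exactly one edge extension, say $\ext(e^\star)$. Then $\ext(e^\star)$ contributes two oppositely directed rays at $v$, while each of the remaining $\deg_{\Gamma_r}(v) - 1$ incident extensions has $v$ as an endpoint and contributes exactly one ray. This gives $2 + (\deg_{\Gamma_r}(v) - 1) = \deg_{\Gamma_r}(v) + 1$ rays, and hence the same number of inner cells incident to $v$. If instead $v$ lies on the outer cycle, the two outer-cycle edges at $v$ contribute two rays that bound the interior side of $v$ as a convex wedge (since $v$ is a vertex of the convex hull), and each of the $\deg_{\Gamma_r}(v)$ incident extensions contributes one ray pointing into this wedge; any continuation of an extension past $v$ to the outside of the convex hull is irrelevant since it can only bound exterior regions. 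Inside the wedge, $\deg_{\Gamma_r}(v)$ pairwise distinct rays yield $\deg_{\Gamma_r}(v) + 1$ inner sectors, one per inner cell incident to $v$.

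The hard part will be to pin down the case analysis rigorously. For non-outer $v$, the "exactly one interior extension" assertion ultimately rests on the greedy order used to build the extensions: once any extension at $v$ passes through $v$ with $v$ in its interior, a later extension at $v$ that tried to continue past $v$ would meet the first one at a point that is interior to both, i.e., they would cross, so the later one is forced to end at $v$. For outer $v$, the delicate point is arguing that the $\deg_{\Gamma_r}(v)$ interior rays all lie strictly inside the convex wedge spanned by the two outer-cycle edges at $v$, so that no inner cell is mistakenly double-counted across the outer boundary; this follows from convexity of the outer cycle combined with the fact that each red edge at $v$ has its other endpoint on the same side of the outer cycle as the drawing.
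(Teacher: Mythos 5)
Your argument is correct and follows essentially the same route as the paper's proof: use general position to restrict attention to extensions of red edges incident to $v$, then split into the convex-hull case (where the two outer-cycle edges plus $\deg_{\Gamma_r}(v)$ extension rays bound $\deg_{\Gamma_r}(v)+1$ inner sectors) and the interior case (where exactly one extension passes through $v$ and the rest end there, giving $\deg_{\Gamma_r}(v)+1$ rays). Your write-up is in fact somewhat more careful than the paper's, e.g.\ in justifying via the greedy ordering why exactly one extension contains $v$ in its interior and in noting that continuations past a hull vertex only bound exterior regions.
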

\begin{proof}
    Recall that we assume the vertices of $\Gamma$ are in general position and therefore $v$ only lies on edge extensions of edges incident to $v$.
    Also note that the outer edges are not contained in $\Gamma_r$.
    
    If $v$ lies on the convex hull of $\Gamma$, it is not contained in the interior of any segments and~$v$ is the endpoint of exactly $\deg_{\Gamma_r}(v)$ edge extensions.
    Since $v$ is also incident to two edges of the outer cycle, $v$ therefore lies on the boundary of exactly $\deg_{\Gamma_r}(v) + 1$ internal cells of $\Lambda$.

    If $v$ does not lie on the convex hull of $\Gamma$, recall that $v$ lies in the interior of exactly one edge extension, since our construction ensures that no two such extensions cross.
    Since every edge incident to $v$ induces an edge extension that has $v$ as its endpoint, $v$ lies on the boundary of $\deg_{\Gamma_r}(v) + 1$ cells, which concludes the proof.
\end{proof}

\begin{figure}
\centering
\subcaptionbox{\label{fig:planarization-1}}{\includegraphics[page=1]{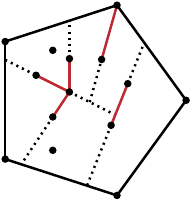}}
\hfil
\subcaptionbox{\label{fig:planarization-2}}{\includegraphics[page=1]{figures/Planarization_Sketch.pdf}}
\caption{Left: A drawing $\Lambda$ obtained after extending all inner \red edges. Right: its corresponding planarization $H$. Note that the cells of $\Lambda$ correspond bijectively to the faces of the planarization.}
\label{fig:planarization}
\end{figure}

\begin{lemma}
\label{prop:numFaces}
    $\abs{C(\Lambda)} = \abs{E(\Gamma_r)} + 1$.
\end{lemma}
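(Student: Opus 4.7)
The plan is to apply Euler's formula to the planar graph $H$ obtained by planarizing $\Lambda$. First observe that $H$ is connected: by construction every edge extension $\ext(e_i)$ stops when blocked by the outer cycle, by some red edge (which is contained in another extension in the final drawing), or by an earlier extension, so each extension is attached at both endpoints to the rest of $\Lambda$; since the outer cycle is connected, so is $H$. Hence $|C(\Lambda)| = F(H) - 1 = 1 + E(H) - V(H)$, reducing the lemma to proving the identity $E(H) - V(H) = t$, where $t := |E(\Gamma_r)|$.

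For the edge count, write $\tau(s)$ for the number of vertices of $H$ lying strictly in the interior of a segment $s$ of $\Lambda$ (either an outer edge or an extension). Then $s$ is subdivided into $\tau(s) + 1$ edges of $H$, so summing over the $n'$ outer edges and the $t$ extensions yields $E(H) = n' + t + I$, where $I := \sum_s \tau(s)$ counts total interior incidences.

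For the vertex count I would partition $V(H)$ into the $n'$ outer cycle vertices, the set $V(\Gamma_r) \setminus V_o$ of interior red-incident vertices of $\Gamma$, and the set $V_t$ of touch points not in $V(\Gamma)$. I would then classify each interior incidence into one of three types: (A) a touch point in the interior of an outer edge, i.e., the endpoint of some extension landing on the outer cycle; (B) an endpoint of a red edge $e_i$ lying in the interior of $\ext(e_i)$ because the extension overshoots it; or (C) a touch point in the interior of some extension $\ext(e_i)$, namely the endpoint of another extension $\ext(e_j)$ landing on it at a non-vertex point. The general position assumption rules out any other configuration; in particular it forbids any vertex of $\Gamma$ other than the two endpoints of $e_i$ from lying on the supporting line $\ell_i$. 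Using the observation cited in the paper, every vertex not on the outer cycle lies in the interior of exactly one extension, so the type-(B) incidences are in bijection with $V(\Gamma_r) \setminus V_o$, while types (A) and (C) together are in bijection with $V_t$. Hence $I = |V(\Gamma_r) \setminus V_o| + |V_t|$, which gives $V(H) = n' + I$ and therefore $E(H) - V(H) = t$.

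The main obstacle I anticipate is dealing with degenerate touches in which two extensions share the same touch point, which could break the bijection between $V_t$ and type-(A)/(C) incidences. Such a coincidence would force three distinct objects (two extension endpoints plus the segment they land on) to meet exactly at one interior point; this has to be ruled out either by a generic-position argument or by a direct appeal to the ordering used in the construction of the extensions. Once this is handled, the Euler-based counting above delivers the desired equality.
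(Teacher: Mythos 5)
Your proof is correct and follows essentially the same route as the paper: both planarize $\Lambda$, establish $E(H)-V(H)=\abs{E(\Gamma_r)}$ by a counting argument (the paper counts the $x$ edge-disjoint paths induced by the extensions, using that a path has one more edge than internal vertices, where you count subdivision incidences segment by segment), and then apply Euler's formula. The obstacle you anticipate is not actually one: since no two segments of $\Lambda$ cross, every touch point in $V_t$ lies in the interior of \emph{exactly one} segment, so even if several extensions end at the same point $p$ on $\ext(e_i)$, that point still contributes exactly one to $I$ and exactly one to $\abs{V_t}$, and your bijection survives without any extra genericity assumption.
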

\begin{proof}
    Let $x \coloneqq \abs{E(\Gamma_r)}$, i.e., $x$ denotes the number of edge extensions in $\Lambda$.
    Starting from~$\Lambda$, we first construct a planar graph $H$ and a corresponding planar drawing $\mathcal E$ by erasing all vertices of $\Lambda$ from the drawing and placing a vertex on every point where two edge extensions touch or an extension touches the outer cycle; see \Cref{fig:planarization} for an example.
    Note that vertices on the outer cycle that are not incident to red edges are not part of~$H$.
    In the context of this proof, let $n$, $m$, and $f$ denote the number of vertices, edges, and faces of $\mathcal E$, respectively.
    Observe that $f = \abs{C(\Lambda)} + 1$, since $f$ also accounts for the outer face of $\mathcal E$.
    
    We first show that $m = n + x$.
    Observe that every touching point (and thus the corresponding vertex of $H$) either lies on the convex hull or in the interior of exactly one edge extension (otherwise the segments would cross, a contradiction).
    Let $n_1$ and $n_2$ be the number of vertices satisfying the former and the latter property, respectively.
    Because the convex hull results in a cycle with $n_1$ vertices, 
    it contains $m_1 \coloneqq n_1$ edges.
    Note that the $x$ edge extensions of $\Lambda$ result in $x$ edge-disjoint paths in $\mathcal E$ that make up exactly the remaining $m_2 \coloneqq m - m_1$ edges of $\mathcal E$.
    Also note that each of the $n_2$ inner vertices of $\Lambda$ is an internal (i.e., non-endpoint) vertex of exactly one such path and that the number of edges of a path is the number of its internal vertices plus one.
    Thus, the paths contain $m_2 \coloneqq n_2 + x$ edges overall.
    Because $n = n_1 + n_2$, we obtain $m = m_1 + m_2 = n + x$.
     
    Note that $H$ is connected, since every vertex is incident to an edge of the convex hull or an edge extension of $\Lambda$.
    By Euler's Formula, we therefore obtain
    \[ \abs{C(\Lambda)} = f - 1 = (2 - n + m) - 1 = 1 - n + m = 1 - n + (n + x) = 1 + x = \abs{E(\Gamma_r)} + 1.\qedhere\]
\end{proof}

We can now show the required statement relating the number of red edges to $n'$.

\begin{lemma}
    $\displaystyle\abs{E(\Gamma_r)} + \sum_{c \in C(\Lambda)} (\norm{c} - 3) \geq n' - 3$.
\end{lemma}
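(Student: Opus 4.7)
The plan is to prove the inequality by a double counting argument on vertex-cell incidences between the original vertices of $\Gamma$ and the inner cells of $\Lambda$, combined with the two preceding lemmas.

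First, I would express the quantity $\sum_{c \in C(\Lambda)} \norm{c}$ by summing over vertices $v$ of $\Gamma$ and counting how many inner cells of $\Lambda$ have $v$ on their boundary. I would partition $V(\Gamma)$ into three groups. For vertices in $V(\Gamma_r)$, \cref{prop:numIncidences} directly provides the incidence count $\deg_{\Gamma_r}(v) + 1$; summing over $V(\Gamma_r)$ and applying the handshake lemma gives $2\abs{E(\Gamma_r)} + \abs{V(\Gamma_r)}$. For vertices on the outer cycle that are not in $V(\Gamma_r)$, I would argue that by general position no edge extension can pass through them, so each such vertex lies only on the outer cycle and therefore borders exactly one inner cell. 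For interior vertices that are not in $V(\Gamma_r)$, the same argument shows that they lie strictly inside some cell and contribute nothing.

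Combining these contributions yields $\sum_{c \in C(\Lambda)} \norm{c} \geq 2\abs{E(\Gamma_r)} + n'$, where the nonnegative slack comes precisely from the inner vertices in $V(\Gamma_r)$. Substituting $\abs{C(\Lambda)} = \abs{E(\Gamma_r)} + 1$ from \cref{prop:numFaces} into $\sum_{c \in C(\Lambda)} (\norm{c} - 3) = \sum_{c \in C(\Lambda)} \norm{c} - 3\abs{C(\Lambda)}$ and rearranging then gives exactly $\abs{E(\Gamma_r)} + \sum_{c \in C(\Lambda)} (\norm{c} - 3) \geq n' - 3$, as desired.

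The main obstacle I anticipate is carefully justifying the incidence counts for vertices outside $V(\Gamma_r)$. The general position assumption (no three collinear vertices) is essential here: an edge extension lies on the line through the two endpoints of a red edge, and a third vertex on that line would violate general position. Consequently, a vertex $v \notin V(\Gamma_r)$ cannot lie in the interior or at the endpoint of any extension, which pins down its incidence count immediately as either $1$ (on the outer cycle) or $0$ (in the interior). Once this geometric observation is in place, the remainder of the argument is straightforward arithmetic.
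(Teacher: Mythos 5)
Your proposal is correct and follows essentially the same route as the paper's proof: the same double count of vertex--cell incidences split over the same three vertex classes, combined with \cref{prop:numIncidences}, the handshake lemma, and \cref{prop:numFaces}. The only cosmetic difference is that the paper first establishes the exact identity $\abs{E(\Gamma_r)} + \sum_{c \in C(\Lambda)} (\norm{c} - 3) = \abs{V(\Gamma)} - \abs{V_0(\Gamma)} - 3$ and then bounds $n' \leq \abs{V(\Gamma)} - \abs{V_0(\Gamma)}$, whereas you fold that slack (the interior vertices incident to red edges) directly into the inequality.
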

\begin{proof}
    Let $V_c(\Gamma)$ denote the vertices of $\Gamma$ that lie on the outer cycle of $\Gamma$ but are not contained in $\Gamma_r$, i.e., that are not incident to \red edges.
    Moreover, we let $V_0(\Gamma) \coloneqq V(\Gamma) \setminus (V(\Gamma_r) \cup V_c(\Gamma))$ denote all remaining vertices, i.e., the vertices that lie on the inside of $\Gamma$ and are not incident to a \red edge.
    
    Note that it suffices to show the following:
    \[ \displaystyle\abs{E(\Gamma_r)} + \sum_{c \in C(\Lambda)} (\norm{c} - 3) = \abs{V(\Gamma)} - \abs{V_0(\Gamma)} - 3 \] 
    Since $n' \leq \abs{V(\Gamma)} - \abs{V_0(\Gamma)}$, the statement of the lemma then follows.
    
    Note that we can compute the combined size of all cells by summing up the number of vertex-cell incidences over all vertices. 
    Since the vertices of $V_c(\Gamma)$ are incident to exactly one cell in $\Lambda$ and vertices of $V_0(\Gamma)$ lie on the boundary of no cells, we obtain the following equation using \cref{prop:numIncidences}.
    \begin{equation}
        \label{eq:faces}
        \sum_{c \in C(\Lambda)} \norm{c} = \sum_{v \in V(\Gamma_r)} (\deg_{\Gamma_r}(v) + 1)+ \abs{V_c(\Gamma)}
    \end{equation}
    Moreover, the degree sum formula yields the following equation.
    \begin{equation}
        \label{eq:degSum}
        \sum_{v \in V(\Gamma_r)} \deg_{\Gamma_r}(v) = 2\abs{E(\Gamma_r)}
    \end{equation}
    Combining these equations with \Cref{prop:numFaces}, we obtain the desired statement as follows.

    \newcommand\oversetto[3][]{\overset{\text{\makebox*{#1}{#2}}}{#3}}
    \begin{align*}
        & \abs{E(\Gamma_r)} + \sum_{c \in C(\Lambda)} (\norm{c} - 3) \\
        &= \abs{E(\Gamma_r)} + \sum_{c \in C(\Lambda)} \norm{c} - 3|C(\Lambda)| & (\ref{eq:faces})\\
        &= \abs{E(\Gamma_r)} + \sum_{v \in V(\Gamma_r)} (\deg_{\Gamma_r}(v) + 1)+ \abs{V_c(\Gamma)} - 3|C(\Lambda)| & (\ref{eq:degSum})\\
        &= 3\abs{E(\Gamma_r)} + \abs{V(\Gamma_r)} + \abs{V_c(\Gamma)} - 3|C(\Lambda)|\\
        &= 3\abs{E(\Gamma_r)} + \abs{V(\Gamma)} - \abs{V_0(\Gamma)} - 3|C(\Lambda)|& \text{\cref{prop:numFaces}} \\
        &= \abs{V(\Gamma)} - \abs{V_0(\Gamma)}  - 3 & &\qedhere
    \end{align*}
\end{proof}

Since we now have $3n-6-(n'-3)$ \blue edges and at least $n'-3$ \red edges, we obtain the following result.

\nonConvex*
\label{thm:non-convex-2-lower-bound*}

\end{document}